\documentclass[lettersize,journal]{IEEEtran}
\usepackage{amsmath,amsfonts}
\usepackage{algorithmic}
\usepackage{array}
\usepackage[caption=false,font=normalsize,labelfont=sf,textfont=sf]{subfig}
\usepackage{textcomp}
\usepackage{stfloats}
\usepackage{url}
\usepackage{verbatim}
\usepackage{graphicx}
\def\BibTeX{{\rm B\kern-.05em{\sc i\kern-.025em b}\kern-.08em
    T\kern-.1667em\lower.7ex\hbox{E}\kern-.125emX}}
\usepackage{balance}
\usepackage{amssymb}
\usepackage{algorithm}
\usepackage{cite}
\usepackage{bm}
\usepackage{cuted}
\usepackage{placeins}

\usepackage[utf8]{inputenc}
\usepackage{lineno} 
\usepackage{cases}
\usepackage{float}  
\usepackage{textcomp}
\usepackage{bbold}
\usepackage{xcolor}
\usepackage{siunitx}

\usepackage{caption}
\usepackage{amsthm}

\newtheorem{Prop}{Proposition}

\begin{document}
\title{Robust Beamforming for Practical RIS-Aided RSMA Systems with Imperfect SIC under Transceiver Hardware Impairments}
\author{Xuejun Cheng$^{\dag}$,
	Qian Zhang$^{\dag}$,~\IEEEmembership{Graduate Student Member,~IEEE,}
	Yunnuo Xu,~\IEEEmembership{Member,~IEEE,} \\
	Zheng Dong,~\IEEEmembership{Member,~IEEE,}
	Ju Liu,~\IEEEmembership{Senior Member,~IEEE,}
	Bruno Clerckx,~\IEEEmembership{Fellow,~IEEE}
	
	\thanks{
		This work was supported in part by the National Natural Science Foundation of China under Grant 62071275; 
		The corresponding authors: Ju Liu; Zheng Dong. E-mail: \{juliu, zhengdong\}@sdu.edu.cn. }
	\thanks{Xuejun Cheng, Qian Zhang, Yunnuo Xu, Zheng Dong, and Ju Liu are with School of Information Science and Engineering, Shandong University, Qingdao, 266237, China. (e-mail: chengxuejun@mail.sdu.edu.cn; qianzhang2021@mail.sdu.edu.cn; yunnuo.xu@sdu.edu.cn; zhengdong@sdu.edu.cn; juliu@sdu.edu.cn.}
	\thanks{Bruno Clerckx is with the Department of Electrical and Electronic Engineering, Imperial College London, SW7 2AZ London, U.K. (e-mail: b.clerckx@imperial.ac.uk).}
	\thanks{$\dag$ Equal contribution.}}

\maketitle

\begin{abstract}
Reconfigurable intelligent surface (RIS)-aided rate-splitting multiple access (RSMA) systems have demonstrated remarkable potential in enhancing spectral efficiency.
However, most existing works rely on ideal hardware, which is unrealistic. In practical deployments, RIS elements suffer from amplitude–phase coupling, where transceivers are subject to hardware impairments (HWI), and successive interference cancellation (SIC) in RSMA networks cannot achieve perfect interference elimination for decoded signals.
To address these limitations, we investigate a robust beamforming design for RIS-aided RSMA systems under practical hardware imperfections.
We first characterize the asymptotic signal-to-noise ratio (SNR) of practical RIS systems when the beamformer is designed based on ideal RIS model, thereby theoretically quantifying the resulting performance degradation.
We then derive a closed-form expression for the distortion noise power induced by transceiver HWI, while also accounting for residual interference due to imperfect SIC.
Building on these insights, we establish a comprehensive system model that jointly incorporates all hardware-induced impairments and formulate a multiuser sum rate maximization problem.
To solve the resulting non-convex optimization problem, we develop an efficient block variable relaxation algorithm. Simulation results verify that the proposed scheme significantly outperforms conventional non-orthogonal multiple access (NOMA) approaches, and achieves superior robustness compared with benchmark schemes neglecting HWI, imperfect SIC, or amplitude–phase coupling.
\end{abstract}
\begin{IEEEkeywords}
Practical reconfigurable intelligent surfaces, rate-splitting multiple access, amplitude-phase coupling, hardware impairments, imperfect successive interference cancellation.
\end{IEEEkeywords}

\vspace{-20pt}
\section{Introduction}
Emerging vertical applications like extended reality (XR) will drive the demand for novel enablers in future sixth-generation (6G) wireless communication systems with better coverage and higher spectral efficiency~\cite{li2025exploring}. 
Among the key enabling technologies, reconfigurable intelligent surfaces (RIS) and rate-splitting multiple access (RSMA) have attracted significant attention.
RIS reconfigures the wireless propagation environment by adjusting the phase shift of the signal to significantly enhance communication quality and coverage~\cite{li2025covert}. 
Meanwhile, RSMA serves as an effective interference management technique for 6G wireless networks~\cite{9831440}. 
RSMA splits and encodes user messages into common stream shared by all users and private streams for individual users at the transmitter side, with successive interference cancellation (SIC) executed at the receiver side. 
Therefore, RSMA enables flexible interference handling by both decoding interference and treating it as noise~\cite{22}.
Motivated by these advantages, the interplay RIS-RSMA has been studied under various scenarios~\cite{li2022rate,katwe2024rsma}, including addressing beyond-diagonal RIS, which are more advanced than the diagonal RIS. 

While most of the existing beamforming designs are based on perfect hardware, where each reflecting element maintains unit reflection amplitude. However, practical RIS systems exhibit an inherent coupling between reflection amplitude and phase shift due to hardware constraints. This coupling originates from the physical properties of RIS components and is difficult to eliminate completely in practical implementations~\cite{Abeywickrama2020NonIdealRIS}.
Additionally, practical systems suffer from hardware impairments (HWI), including amplifier nonlinearity and phase noise, as well as imperfect SIC.
Some recent studies have investigated the impact of these imperfections on RSMA or RIS systems.
In~\cite{zhou2021secure}, Zhou {\it et al.} investigated RIS-aided secure communication system with transceiver HWI.

In~\cite{10517628}, Abanto-Leon {\it et al.} investigated the radio resource management design for RSMA while accounting for imperfect SIC.
In~\cite{Abeywickrama2020NonIdealRIS}, Abeywickrama {\it et al.} demonstrated the coupling between the reflection amplitude and phase shift of RIS elements.
In~\cite{zhang2024practical}, Zhang {\it et al.} highlighted that the non-uniform amplitude response in practical RIS systems leads to a mismatch between the optimal phase shifts derived under the ideal RIS model and those required in real scenarios, thereby causing performance degradation. 
However, to the best of the authors' knowledge, robust transmission design for practical RIS-aided RSMA systems under these practical hardware imperfections has not been investigated. 

To enhance the robustness of beamforming, we propose a robust scheme for practical RIS-aided RSMA systems with imperfect SIC under transceiver HWI. 
To better understand the effect of hardware mismatch, we have also conducted rigorous theoretical verification of performance loss when applying beamformers designed under ideal constant-modulus RIS model to practical RIS systems.

Specifically, 1) we have derived the asymptotic signal-to-noise ratio (SNR) of a beamformer designed based on ideal RIS model in practical RIS system, theoretically validating the performance loss (Proposition 1); 
2) we have derived the closed-form HWI distortion noise power with imperfect SIC (Appendix A);
3) we have formulated the sum rate maximization (SRM) problem under imperfect hardware systems and developed a block variable relaxation algorithm to tackle the resulting non-convex optimization problem.

\vspace{-9pt}
\section{Systems Model}
\label{s2}
As shown in Fig.~\ref{fig1}, we consider a downlink RIS-aided RSMA system with practical hardware imperfection where the impact of transceiver HWI, imperfect SIC, and the amplitude-phase coupling in practical RIS on the system performance are considered. 
In the above model, an $M$-antenna base station (BS) serves $K$ single-antenna users with the aid of an $N$-element practical RIS. The direct  BS-users links are obstructed. 
Let $\mathbf{G}\in\mathbb{C}^{N\times M}$ and $\bm f_k\in\mathbb{C}^{N\times 1}$ represent the channel from BS to RIS and from RIS to user-$k$, respectively.
The phase-shift matrix of the RIS is denoted by $\bm \Theta = \mathrm{diag}\left(\bm \phi^\mathrm{H}\right)\in \mathbb{C}^{N \times N}$, where $\bm \phi$ is the reflecting coefficient.
\vspace{-25pt}
\subsection{Practical Amplitude Model}
To characterize the amplitude-phase coupling inherent in practical RIS, we adopt the parallel resonant circuit model referenced in~\cite{Abeywickrama2020NonIdealRIS}, whereby the reflection coefficient of each RIS element is denoted as $\phi_n = \beta_n(\theta_n) e^{j\theta_n}$.  
The reflection amplitude $\beta_n$ of the $n$-th element is expressed as the function of the corresponding phase shift $\theta_n$, i.e.,
\begin{equation} \label{phase_amplitude} 
	\begin{split}
		\beta_n(\theta_n) = \rho \left( {\rm sin}(\theta_n - \delta) + 1 \right)^{\alpha} + \beta_{\min},
	\end{split}
\end{equation}
\noindent where $\rho = (1-\beta_{\min})/2^{\alpha}$. $\beta_{\min}\geq 0$, $\delta\geq 0$, and $\alpha\geq 0$ are the constants related to the specific circuit implementation.

\vspace{-15pt}
\subsection{Signal Model}
\subsubsection{RSMA}
Following 1-layer RSMA scheme~\cite{7555358,mao2018rate}, the BS splits and encodes the transmitted signal into a common stream $s_c$ and private streams $\left\{s_k\right\}_{k=1}^K$. 
Let $\bm s = \left[s_c,s_1,\cdots,s_K\right]^\mathrm{T}$ where $s_c,s_k\sim \mathcal{C} \mathcal{N}\left(0,1\right)$ and $\mathbb{E} \left[\bm s \bm s^\mathrm{H}\right] = \mathbf{I}$. 
The linear precoding vectors for the common and private streams are represented by $\bm w_c\in\mathbb{C}^{M \times 1}$ and $\left\{\bm w_k\right\}_{k=1}^K \in\mathbb{C}^{M \times 1}$, respectively.
Therefore, the signal transmitted by the BS is $\bm x = \bm w_cs_c + \sum_{k=1}^{K}{\bm w_k s_k}+ \bm \eta_t$.
In particular, $\bm \eta_t \sim \mathcal{C} \mathcal{N}\left(\bm 0,m_t \widetilde{\text{diag}}(\bm w_c \bm w_c^\mathrm{H}+\sum_{k=1}^{K}\bm w_k \bm w_k^\mathrm{H})\right)$ is the distortion noise attributable to generalized HWI at the transmitter as validated by both theoretical analysis and field measurements~\cite{schenk2008rf,bjornson2014massive}, where $\widetilde{\text{diag}}$ denotes the diagonal matrix with the same diagonal element; 
$m_t \in (0,1)$ is the ratio of transmit distorted noise power to transmit signal power.

Then, the received signal at user-$k$ can be represented as
\begin{align} \label{y_K}
	y_k = \tilde{y}_k + \eta_{r,k}  = \bm h_k^\mathrm{H} \bm x+ \eta_{r,k} + n_k,
\end{align}

\noindent where $\bm h_k^\mathrm{H} = \bm f_k^\mathrm{H} \bm\Theta \mathbf{G}$, $ n_k \sim \mathcal{C} \mathcal{N}\left(0,\sigma_k^2\right)$ denote the additive white Gaussian noise (AWGN) at user-$k$. 
$\eta_{r,k} \sim \mathcal{C} \mathcal{N}\left(0,m_r\mathbb{E}[|\tilde{y}_k|^2]\right)$ is the distortion noise caused by receiver HWI~\cite{schenk2008rf,bjornson2014massive}, where $m_r \in (0,1)$ is the ratio of distorted noise power to undistorted received signal power. 

\noindent\textit{Remark 1: 
Both $m_t$ and $m_r$ characterize the HWI levels, corresponding to the error vector magnitude (EVM).
In practice, EVM is measurable via vector signal analyzer (VSA) following relevant standards such as 3GPP TS 36.521-1. 
According to the 3GPP LTE standard, typical LTE transceivers that support all standardized modulation schemes require EVM below 0.08~\cite{bjornson2014massive}. Therefore, our simulations consider parameter values within the range [0, 0.08].
}
\begin{figure}[t]
	\centering
	\includegraphics[width=0.4\textwidth]{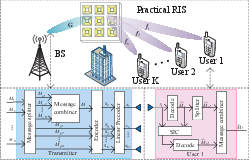}
	\caption{A Practical RIS-aided RSMA system.}
	\label{fig1}
	\vspace{-20pt}
\end{figure}

At the user side, each user initially decodes the common stream $s_c$ by treating all private streams as interference. Subsequently, user-$k$ decodes its private stream after removing $s_c$ via SIC. 
However, different from most existing designs assuming perfect interference cancellation for correctly decoded signals, practical SIC implementation is actually imperfect, which introduces nonnegligible residual self-interference that adversely affects private stream decoding performance.
Therefore, the signal-to-interference-plus-noise ratio (SINR) of user-$k$ under transceiver HWI and imperfect SIC is given as
\vspace{-3pt}
\begin{equation*}
	\gamma_{c,k} \!=\! \frac{|\bm h_k^\mathrm{H} \bm w_c|^2}{\sum_{i=1}^{K}\!{|\bm h_k^\mathrm{H} \bm w_i|^2}\!+\!\Phi_{c,k}}, 
	\gamma_{p,k}\! =\! \frac{|\bm h_k^\mathrm{H} \bm w_k|^2}{\sum_{i=1,i\neq k}^{K}\!{|\bm 	h_k^\mathrm{H} \bm w_i|^2}\!+\!\Phi_{p,k}},
\end{equation*}
\vspace{-8pt}

\noindent where $
\Phi_{c,k} \!=\! \bm h_k^\mathrm{H}[m_r \mathbf{A}\!+\! m_t(1\!+\! m_r)\widetilde{\text{diag}}(\mathbf{A})]\bm h_k \!+\!\sigma^2$,
$\Phi_{p,k} \!\!=\!\!  \delta_{\text{SIC}}^2|\bm h_k^\mathrm{H}\bm w_c|^2\!+\!\Phi_{c,k}, 
\mathbf{A} \!=\! \bm w_c \bm w_c^\mathrm{H}\!+\!\sum_{k=1}^{K}{\!\bm w_k \bm w_k^\mathrm{H}},\sigma^2 \!=\! (1 \! + \! m_r)\sigma_k^2$,
$0 \!\leq\! \delta_{\text{SIC}}\!\!\leq\!\! 1$ denotes the coefficient of SIC imperfection~\cite{10517628}.
The detailed derivation of $\Phi_{c,k}$ and $\Phi_{p,k}$ are provided in Appendix A.
In practical deployments, BS typically cannot obtain the exact value of $\delta_{\text{SIC}}$. To guarantee that the SINR satisfies the requirements for the allocated data rate, this parameter must be configured appropriately. The typical range for $\delta_{\text{SIC}}$ is set between 4\% and 10\%~\cite{9406993}.


The achievable rates of user-$k$ are given as $R_{k} = R_{c,k}+R_{p,k}=\log_2\left(1+\gamma_{c,k}\right) + \log_2\left(1+\gamma_{p,k}\right)$.
To ensure successful common stream $s_c$ decoding by all users, its achievable rate is constrained as $R_{c} \!= \!\min\!{\{R_{c,1},\dots, R_{c, K}\}}$. 
Then, the overall system sum rate is $R_{\text{total}} = R_c +\sum_{k=1}^{K} R_{p,k}$.

\vspace{-0pt}

\subsubsection{NOMA}
The transmitted signal at the BS can be expressed as $\bm x^\text{NOMA}= \sum_{k=1}^{K}{\bm w_k s_k} + \bm \eta_t$.
We assume that user-$k$'s signal is decoded prior to user-$l$'s signal for any $l>k$. Accordingly, the SINR at user-$l$ for decoding user-$k$'s signal and the SINR of user-$K$ are respectively given by
\begin{equation*}
	\gamma_l^k \!\!=\!\! \frac{|\bm h_l^\mathrm{H} \bm w_k|^2}{\sum_{i=k+1}^{K}{\!|\bm h_l^\mathrm{H} \bm w_i|^2}\!\!+\!\Phi_{l,k}}, 
	\gamma_K^K \!\!=\!\! \frac{|\bm h_K^\mathrm{H} \bm w_K|^2}{\sum_{i=1}^{K-1}{\!\delta_{\text{SIC}}^2|\bm h_K^\mathrm{H}\bm w_i|^2\!\!+\!\Phi_{K,K}}}, 
\end{equation*}
\noindent where $\phi_{K,K} \!\!=\!\! \bm h_K^\mathrm{H}[m_r \mathbf{B}\!+ \! m_t(1\!+\! m_r)\widetilde{\text{diag}}(\mathbf{B})]\bm h_K \!+\!(1\!+ \! m_r)\sigma_K^2$,
$\phi_{l,k} \!\!=\!\!\bm h_l^\mathrm{H}[m_r \mathbf{B}\!+ \! m_t(1\!+ \! m_r)\widetilde{\text{diag}}(\mathbf{B})]\bm h_l \!+\!\sigma^2\!+\! \sum_{i=1}^{k}{\!\delta_{\text{SIC}}^2 |\bm h_l^\mathrm{H} \bm w_i|^2}$,
$\mathbf{B} = \sum_{k=1}^{K}\bm w_k \bm w_k^\mathrm{H}$.
$k \in \mathcal{T}_k \triangleq\{1,2,\ldots,K-1\}, l \in \mathcal{T}_l \triangleq\{k,k+1,\ldots,K\}$.
Accordingly, the overall system sum rate is $R_{\text{total}} = \sum_{k=1}^{K} \log_{2}\left(1+\min\limits_{l\in \mathcal{T}_{l}} \gamma_{l}^{k}\right)$.

\section{asymptotic SNR}
\vspace{-3pt}

To characterize the SNR degradation caused by amplitude-phase coupling, we derive the asymptotic SNR of beamformers designed based on ideal RIS model when applied to practical RIS systems.
The channels reduce to $\bm f \in \mathbb{C}^{N \times 1}$ and $\bm g \in \mathbb{C}^{N \times 1}$, both following independent Rayleigh fading, i.e., $\bm f \sim \mathcal{CN}(0, \tau_f^2 \mathbf{I})$ and $\bm g \sim \mathcal{CN}(0, \tau_g^2 \mathbf{I})$.

For ideal RIS systems, the asymptotic SNR is obtained by solving $\max \limits_{\bm \theta} |\bm f^\mathrm{H}\mathrm{diag}\left(\bm g\right)\bm \phi|^2$. 
The optimal phase shift achieves phase alignment of all reflected paths, i.e., $\theta_n^\star \!=\! -\!\arg(f_n^* g_n)$. According to~\cite{zhang2024beyond}, the asymptotic SNR is
\vspace{-3.5pt}
\begin{equation*}
	\gamma_I^a \!=\! {\mathbb{E} \!\left[ \left|\sum_{n=1}^{N}\! |f_n| |g_n|\right|^2\right]\! }\frac{P_{\max}}{\sigma^2} \!=\!\frac{\tau_f^2\tau_g^2 P_{\max}}{16\sigma^2}\!\left[N^2\pi^2\! +\! N\!\left(16\!-\!\pi^2\right) \right]\!,
	\vspace{-3.5pt}
\end{equation*}
\noindent where $P_{\max}$ is the BS power budget, and $\sigma^2$ is the noise power.
Subsequently, the optimal phase shifts are applied to practical RIS systems. 
Therefore, the asymptotic SNR of the practical RIS is altered and expressed as $\gamma^a_P = \mathbb{E} \left[ \left|\sum_{n=1}^{N} |f_n| \beta_n\left(\theta_n^\star \right) |g_n|\right|^2\right] \! \frac{P_{\max}}{\sigma^2}$, bounded by $\left[\beta_{\min}^2 \gamma_I^a, \gamma_I^a\right]$.
\vspace{-8pt}
\begin{Prop}
	\label{proposition_1}
	The ratio of the asymptotic SNR under practical RIS model and ideal RIS model is given as follows
	\vspace{-3pt}
	\begin{equation}
		\label{eta}
		\eta \! = \! \frac{\gamma^a_P}{\gamma^a_I} \! = \!	\left[\frac{1 \!-\! \beta_{\min}}{2^\alpha}\left( 1 \!+\! \sum_{k=1}^{\lfloor S/2 \rfloor} \frac{\prod_{t=0}^{2k \!-\!1}(\alpha \! - \! t)}{2^{2k} (k!)^2}\right) \!+\! \beta_{\min}\right]^2 \!\leq\! 1,
		\vspace{-2pt}
	\end{equation}
	\noindent where $S$ is the truncation order of the Taylor expansion, $t \geq 1$ is a positive integer, and $k!$ is the factorial of $k$.
\end{Prop}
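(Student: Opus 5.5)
The plan is to reduce $\eta$ to a single-element expectation of the amplitude function over a uniformly distributed phase, and then evaluate that expectation term by term with a binomial series.

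\emph{Step 1 (decoupling amplitude from channel magnitudes).} Write $f_n^* g_n=|f_n||g_n|e^{j\psi_n}$. Since $\bm f$ and $\bm g$ are independent circularly symmetric Gaussian vectors, the phase $\psi_n$ is uniform on $[0,2\pi)$ and independent of $(|f_n|,|g_n|)$, and this holds independently across $n$. Hence $\theta_n^\star=-\psi_n$ is uniform and independent of the magnitudes. Expanding the square in $\gamma^a_P$ then gives
\begin{equation*}
\gamma^a_P=\frac{P_{\max}}{\sigma^2}\Big[N\,\mathbb{E}[|f|^2]\,\mathbb{E}[|g|^2]\,\mathbb{E}[\beta^2]+N(N-1)\big(\mathbb{E}|f|\,\mathbb{E}|g|\big)^2\big(\mathbb{E}[\beta]\big)^2\Big].
\end{equation*}
For $\gamma^a_I$, the $N^2$ coefficient is $(\mathbb{E}|f|\,\mathbb{E}|g|)^2=\pi^2\tau_f^2\tau_g^2/16$, which matches the closed form quoted from~\cite{zhang2024beyond}. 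Taking $N\to\infty$, which is the asymptotic regime in which the ratio is meant, the $N^2$ terms dominate and I obtain $\eta=\big(\mathbb{E}_{\theta\sim\mathcal U[0,2\pi)}[\beta(\theta)]\big)^2$.

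\emph{Step 2 (evaluating $\mathbb{E}[\beta(\theta)]$).} By \eqref{phase_amplitude}, $\mathbb{E}[\beta]=\rho\,\mathbb{E}[(1+\sin(\theta-\delta))^\alpha]+\beta_{\min}$. Because $\theta$ is uniform on the circle, the shift by $\delta$ can be dropped. I then expand with the generalized binomial series $(1+x)^\alpha=\sum_{m\ge0}\binom{\alpha}{m}x^m$, where $x=\sin\theta\in[-1,1]$. The moments of $\sin\theta$ are
\begin{equation*}
\mathbb{E}[\sin^{m}\theta]=0 \ \text{for odd } m,\qquad \mathbb{E}[\sin^{2k}\theta]=\binom{2k}{k}2^{-2k}.
\end{equation*}
Combining these, each surviving term equals $\binom{\alpha}{2k}\binom{2k}{k}2^{-2k}=\prod_{t=0}^{2k-1}(\alpha-t)/(2^{2k}(k!)^2)$. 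Substituting $\rho=(1-\beta_{\min})/2^\alpha$ and truncating the series at order $S$, so that only $k\le\lfloor S/2\rfloor$ remain, yields exactly the bracket in \eqref{eta}.

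\emph{Step 3 (the bound $\eta\le 1$).} Pointwise, $\beta_n(\theta)\le\rho\,2^\alpha+\beta_{\min}=1$, so $\mathbb{E}[\beta]\le 1$ and $\eta\le1$. This is also consistent with the stated range $[\beta_{\min}^2\gamma_I^a,\gamma_I^a]$.

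The main obstacle is justifying the termwise integration in Step 2. The binomial series must converge uniformly on $[-1,1]$, including the endpoint $x=-1$ when $\alpha$ is not an integer. This holds for $\alpha>0$ because the coefficients $\binom{\alpha}{m}$ eventually have constant sign and are summable, so Abel's theorem together with dominated convergence gives it. If that argument becomes delicate, I would fall back on stating \eqref{eta} as the $S$-th order approximation, which is how the proposition is phrased anyway.

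A secondary subtlety is that a truncated sum need not itself be bounded by $1$. I would therefore prove $\eta\le1$ for the exact (untruncated) value, and remark that for integer $\alpha$ the series terminates, so the truncation is exact once $S\ge\alpha$.
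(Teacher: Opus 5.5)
Your proposal is correct and follows essentially the same route as the paper's Appendix B. Both use the same diagonal/cross-term decomposition of $\mathbb{E}\bigl[|\sum_n |f_n|\beta_n|g_n||^2\bigr]$, whose ratio tends to $(\mathbb{E}[\beta_n])^2$ as $N\to\infty$, and the same truncated binomial expansion averaged over the uniform $\theta_n^\star$ (the paper writes $\sin^s(\theta_n-\delta)$ with complex exponentials and keeps the $s=2l$ terms, which is exactly your moment $\mathbb{E}[\sin^{2k}\theta]=\binom{2k}{k}2^{-2k}$), while your pointwise bound $\beta_n\le\rho 2^\alpha+\beta_{\min}=1$ supplies the justification of $\eta\le 1$ that the paper only asserts, and the termwise integration is most cleanly justified by $\sum_m|\binom{\alpha}{m}|<\infty$ for $\alpha>0$ together with the Weierstrass M-test.
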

\vspace{-6pt}
{\it Proof:} See Appendix B.

\noindent\textit{Remark 2: Proposition 1 is illustrated using independent Rayleigh fading channels. However, the conclusion remains valid when spatial correlation exists, i.e., under correlated Rayleigh fading channel models.
}


Based on \eqref{eta}, we theoretically demonstrate that phase shift designs based on ideal RIS assumptions inevitably lead to performance losses in practical systems. 
Furthermore, for RIS-aided multi-antenna communications, neglecting the amplitude-phase coupling effect results in more severe performance degradation. 
Therefore, accounting for the amplitude-phase coupling effect is essential for enhancing system performance in the design of RIS-aided MIMO systems.

\vspace{-7pt}
\section{Problem Formulation and Beamforming Optimization}
\vspace{-5pt}
\label{s3}
In this section, we aim to maximize the multiuser sum rate in practical RIS-aided RSMA systems to enhance the robustness of the beamformer design. 
Notably, while we adopt the amplitude-phase model in \eqref{phase_amplitude}, the proposed method is applicable to other amplitude-phase coupling models. The SRM problem is formulated as
\vspace{-5pt}
\begin{subequations}
	\label{problem_formulation}
	\begin{align}
		\max\limits_{\bm w,\bm \Theta} \ & f\!\left(\!\bm w,\bm \Theta\!\right)\! = \log_2\!\left(\!1 \!+\! \min\limits_{k \in \mathcal{K}}{\gamma_{c,k}}\!\right)\! \!+\! \sum_{k=1}^{K} \log_2\!\left(1 \!+\! \gamma_{k}\right)\!\\
		\text{s.t.} \ & \mathcal{C}_{\text{BS}}: \|\bm{w}_c\|_2^2 + \sum_{i=1}^{K}{\|\bm{w}_i\|_2^2} \leq P_{\max}, \\[-2pt]
		& \mathcal{C}_k : \min\limits_{k \in \mathcal{K}}{\gamma_{c,k}}/K + \gamma_{k} \geq \gamma_{\text{th}}, \ \mathcal{C}_{\theta}: -\pi \leq \theta_n \leq \pi, \\[-2pt]
		&\mathcal{C}_{\beta}\!: \beta_n(\theta_n) = \rho \left( {\rm sin}(\theta_n - \delta) + 1 \right)^{\alpha} + \beta_{\min},
	\end{align}
\end{subequations}
\noindent where $\bm{w} = \left[ \bm{w}_c^\mathrm{T},\bm{w}_1^\mathrm{T},\bm{w}_2^\mathrm{T},\dots,\bm{w}_K^\mathrm{T} \right]^\mathrm{T}$ denotes the jointly optimized common and private beamforming vectors.
$\gamma_{\text{th}}$ represents the quality of service (QoS) threshold.

To convexify the objective function, we propose an effective variable relaxation algorithm.
Specifically, we introduce slack variables $\xi, \xi_{k}$ and then transform problem \eqref{problem_formulation} equivalently as
\vspace{-13pt}
\begin{subequations}
	\label{problem_formulation1}
	\begin{align}
		\max\limits_{\bm w,\xi,\{\xi_k\}_{k=1}^K, \bm \Theta} \  & \mathcal{Q} = \log_2\left(1 + {\xi}\right) + \sum_{k=1}^{K} \log_2\left(1 + \xi_k\right)  \\
		\text{s.t.} \ &\mathcal{C}_{\text{BS}},\mathcal{C}_{\theta},\mathcal{C}_{\beta}, \mathcal{C}_k : \gamma_{c,k} \geq \xi ,\gamma_{k} \geq \xi_k, \\
		& \mathcal{C}_{\text{th}} : \xi/K + \xi_k \geq \gamma_{\text{th}}, 
		\vspace{-5pt}
	\end{align}
\end{subequations}
\noindent where constraint $\mathcal{C}_{\text{th}}$ ensures that the auxiliary variables $\xi$ and $\xi_k$ satisfy the minimum QoS thresholds.

Due to the coupling between $\bm w$ and $\mathbf \Theta$, the non-convex problem \eqref{problem_formulation1} is significantly complicated. 
Therefore, we decompose it into two subproblems solved iteratively by the alternating optimization (AO) algorithm, which is updated as 
\begin{subequations}
	\label{AO}
	\begin{align}
		\label{AO_w}
		\bm w^{m+1} = \arg \max\limits_{\bm w,\xi,\{\xi_k\}_{k=1}^K} \mathcal{Q}\quad &\text{s.t.} \ \mathcal{C}_{\text{BS}},\mathcal{C}_k,\mathcal{C}_{\text{th}},  \\
		\label{AO_theta}
		\mathbf{\Theta}^{m+1} = \arg \max\limits_{\bm \Theta , \xi,\{\xi_k\}_{k=1}^K} \mathcal{Q}\quad &\text{s.t.} \ \mathcal{C}_k,\mathcal{C}_{\text{th}},\mathcal{C}_{\theta},\mathcal{C}_{\beta}. 
	\end{align}
\end{subequations}

\vspace{-15pt}
\subsection{Optimize Precoding Vector}
In this subsection, problem \eqref{AO_w} can be given by
\begin{equation}
	\label{Precoding Vector}
	\begin{aligned}
		\max\limits_{\bm w,\xi,\{\xi_k\}_{k=1}^K} \  \mathcal{Q} \quad \text{s.t.} \ 	\mathcal{C}_{\text{BS}},\mathcal{C}_k,\mathcal{C}_{\text{th}}.
	\end{aligned}
\end{equation}

Since constraints $\mathcal{C}_k$ are non-convex concerning $\bm w$, problem \eqref{Precoding Vector} cannot be solved directly. 
Therefore, we employ the quadratic transformation method~\cite{shen2018fractional} to reformulate $\mathcal{C}_k$ as
\vspace{-3pt}
\begin{equation*}
	\label{FP}
	\begin{aligned}
		\xi \leq \max\limits_{\mu_{c,k}}\ 2\mathcal{R}\{\mu_{c,k}^{*} \bm h_k^\mathrm{H} \bm w_c\} \!-\! |\mu_{c,k}|^2 (\sum_{i=1}^{K}{|\bm h_k^\mathrm{H} \bm w_i|^2} \!+\! \Phi_{c,k}),\\
		\xi_k \leq \max\limits_{\mu_{k}}\ 2\mathcal{R}\{\mu_{k}^{*} \bm h_k^\mathrm{H} \bm w_k\} \!-\! |\mu_{k}|^2 (\sum_{i=1,i \neq k}^{K}{|\bm h_k^\mathrm{H} \bm w_i|^2}\!+\!\Phi_{p,k}),
	\end{aligned}
\end{equation*}

\noindent which guarantees the same Karush–Kuhn–Tucker (KKT)~\cite{boyd2004convex} points as the original problem \eqref{Precoding Vector}.
We derive the closed-form optimal solutions of $\mu_{c,k},\mu_{k}$ as
\begin{equation}
	\begin{aligned}
		\label{mu}
		\mu_{c,k}^{\star} \!=\! \frac{\bm h_k^\mathrm{H} \bm w_c}{\sum_{i=1}^{K} |\bm h_k^\mathrm{H} \bm w_i|^2 \!+\! \Phi_{c,k}},\  \mu_{k}^{\star} \!=\! \frac{\bm h_k^\mathrm{H} \bm w_k}{\sum_{i=1, i \neq k}^{K} |\bm h_k^\mathrm{H} \bm w_i|^2 \!+\! \Phi_{p,k}}.
	\end{aligned}
\end{equation}

Therefore, problem \eqref{Precoding Vector} is transformed into a convex optimization problem that can be efficiently solved.

\vspace{-10pt}
\subsection{Optimize Phase-Shift Matrix}

In this subsection, problem \eqref{AO_theta} can be given by
\vspace{-3pt}
\begin{equation}
	\label{Phase-Shift Matrix1}
	\begin{aligned}
		\max\limits_{\bm \Theta,\xi,\{\xi_k\}_{k=1}^K} \ \mathcal{Q} \quad \text{s.t.} \  \mathcal{C}_{\text{th}}, \mathcal{C}_k, \mathcal{C}_{\theta}, \mathcal{C}_{\beta}.
	\end{aligned}
\end{equation}

However, the amplitude-phase coupling results in the highly complicated and non-convex problem \eqref{Phase-Shift Matrix1}.
Therefore, we introduce an auxiliary constraint $\tilde{\bm \phi} = \bm \phi$ to decouple the complicated constraints and reformulate problem \eqref{Phase-Shift Matrix1} using the augmented Lagrangian method~\cite{dai2023augmented} in scaled form as
\vspace{-3pt}
\begin{equation}
	\label{Phase-Shift Matrix2}
	\begin{aligned}
		\max\limits_{\bm\phi,\xi,\xi_k}\ &\mathcal{Q}-\lambda{\|{\tilde{\bm \phi}-\bm \phi+\bm \nu}\|_2^2} \quad \text{s.t.} \ \mathcal{C}_{\text{th}}, \mathcal{C}_k, \mathcal{C}_{\theta}, \mathcal{C}_{\beta},
	\end{aligned}
	\vspace{-5pt}
\end{equation}

\noindent where $\lambda > 0$ is a given parameter, and $\bm \nu \in \mathbb{C}^{N \times 1}$ is a scaled dual variable.

Following the principles of the alternating direction method of multipliers (ADMM)~\cite{boyd2011distributed}, problem \eqref{Phase-Shift Matrix2} can be solved through the following update method.
\vspace{-3pt}
\begin{subequations}
	\begin{flalign}
		\label{ADMM_theta}
		\bm \phi^{l+1}  =& \arg\max\limits_{\bm \phi}\ \mathcal{Q}-\lambda{\|{\tilde{\bm\phi}^l-\bm\phi+\bm\nu^l}\|_2^2}, \\
		\text{s.t.}\  &\xi/K + \xi_k \geq \gamma_{\text{th}}, \gamma_{c,k} \geq \xi ,\gamma_{k} \geq \xi_k, \notag\\
		\label{ADMM_hat_theta}
		\bm{\tilde{\bm  \phi}}^{l+1}=&\arg\min\limits_{\tilde{\bm \phi}} \ {\|{ \tilde{\bm\phi}-\bm\phi^{l+1}+\bm\nu^l}\|_2^2}, \\	
		\text{s.t.} \  &\!-\!\pi \!\leq\! {\theta}_n \!\leq\! \pi,	\beta_n({\theta}_n)\! =\! \rho \left( {\rm sin}({\theta}_n \!-\! \delta) \!+\! 1 \right)^{\alpha} \!+\! \beta_{\min},\notag\\
		\label{ADMM_nu}
		\bm\nu^{l+1}=&\bm\nu^l+\tilde{\bm \phi}^{l+1}-\bm\phi^{l+1}. 
	\end{flalign}
	\vspace{-20pt}
\end{subequations}

For non-convex constraints in subproblem \eqref{ADMM_theta}, we adopt a similar relaxation method as in problem \eqref{FP}. Define
\vspace{-3pt}
\begin{align}
	&\bm{\Lambda}_k \!=\! \mathbf{G}^\mathrm{H}\mathrm{diag}\left(\bm f_k\right), \mathbf{P}_{c,k} \!=\! \bm{\Lambda}_k^\mathrm{H} \mathbf{B}_{c,k} \bm{\Lambda}_k, \mathbf{P}_{k} \!=\! \bm{\Lambda}_k^\mathrm{H} \mathbf{B}_{k}\bm{\Lambda}_k,\notag \\
	&\mathbf{B}_{c,k} \!=\! m_r\mathbf{A}\! + \!m_t(1\! + \! m_r)\widetilde{\text{diag}}(\mathbf{A})\! + \!\textstyle\sum_{i=1}^{K}{\bm w_i \bm w_i^\mathrm{H}}, \notag \\
	&\mathbf{B}_{k} \!=\! m_r\mathbf{A}\! + \!m_t(1\! + \! m_r)\widetilde{\text{diag}}(\mathbf{A})\! + \! \delta_{\text{SIC}}^2 \bm w_c \bm w_c^\mathrm{H}\! + \!\textstyle\sum_{i=1,i \neq k}^{K}{\bm w_i \bm w_i^\mathrm{H}}, \notag\\
	&h\! \left(\mu_{c, k},\bm \phi\right) \!=\! 2\mathcal{R}\!\left\{\mu_{c,k}^*\bm w_c^\mathrm{H}\bm\Lambda_{k}\bm\phi\right\}\!-\!{|\mu_{c,k}|^2}\!\left(\bm\phi^\mathrm{H}\mathbf{P}_{c,k}\bm\phi \! + \! \sigma^2\right)\!, \notag\\
	&h\! \left(\mu_{k},\bm \phi\right) \!=\! 2\mathcal{R}\!\left\{\mu_{k}^*\bm w_k^\mathrm{H}\bm\Lambda_{k}\bm\phi\right\} \! - \! {|\mu_{k}|^2}\!\left(\bm\phi^\mathrm{H}\mathbf{P}_{k}\bm\phi\!+\!\sigma^2\right)\!,\notag
	\vspace{-5pt}
\end{align}
\noindent where $\mu_{c, k},\!\mu_{k}$ are calculated based on \eqref{mu}. Constraints $\gamma_{c,k} \!\geq\! \xi,\! \gamma_{k} \!\geq\! \xi_k$ are equivalent to 
$h\left(\mu_{c, k},\bm \phi\right) \!\geq\! \xi,\!\ h\left(\mu_{k},\bm \phi\right) \!\geq\! \xi_{k}. $
Now, subproblem \eqref{ADMM_theta} has been successfully transformed into a convex problem.

Furthermore, we propose a concise and effective method to solve problem \eqref{ADMM_hat_theta}. The problem \eqref{ADMM_hat_theta} can be tackled by 
\vspace{-3pt}
\begin{subequations}
	\label{line_search_method1}
	\begin{align}
		\min_{ \{ \zeta_n \}_{n=1}^N }\ &\sum\nolimits_{n=1}^{N} \left| \zeta_n - \left(\theta_n^{\ell+1} -\nu_n^{\ell}\right) \right|^2 \\
		\text{s.t.}\ 
		& -\pi \leq \theta_n \leq \pi,\  \zeta_n = \beta_n\left( \theta_n\right) e^{j \theta_n},\\
		&\beta_n(\theta_n ) = \rho \left( {\rm sin} (\theta_n - \delta) + 1 \right)^{\alpha} + \beta_{\min}.
	\end{align}
	\vspace{-17pt}
\end{subequations}

Then, the optimal solution of the problem \eqref{line_search_method1} can be obtained by the Proposition \ref{proposition_2}. 
\vspace{-3pt}
\begin{Prop}
	\label{proposition_2}
	Let $\vartheta_n \!= \!\theta_n^{\ell+1} \!-\!\nu_n^{\ell}, f(\theta_n|\vartheta_n)\!=\!   |\vartheta_n|^2 \!+\! \beta_{\min}^2\\ \!+\! \xi^2\! \left({\rm sin}(\theta_n \! -\! \delta) \!+\! 1\right)^{2\alpha} \!+\! 2\xi ({\rm sin}(\theta_n \!-\! \delta) \!+\! 1)^{\alpha} [ \beta_{\min} \!-\! {\rm Re}\{ \vartheta_n \} {\rm cos}(\theta_n) \!-\! {\rm Im}\{\vartheta_n\}{\rm sin}(\theta_n) ]   \!-\! 2\beta_{\min} [{\rm Re}\{ \vartheta_n \} {\rm cos}(\theta_n) \!+\! {\rm Im}\{\vartheta_n\}{\rm sin}(\theta_n) ]$, we can obtain that the value of $\theta_n$ maximinzing $f(\theta_n|\vartheta_n)$ is the global optimal solution to problem~\eqref{line_search_method1}.
	\vspace{-15pt}
\end{Prop}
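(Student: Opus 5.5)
The plan is to show that problem~\eqref{line_search_method1} decouples across the RIS elements, and that each scalar term, written out, is exactly $f(\theta_n|\vartheta_n)$ (with $\xi$ standing for the constant $\rho$ of \eqref{phase_amplitude}). The objective is a sum of $|\zeta_n-\vartheta_n|^2$, and the constraints act on each $\theta_n$ separately. Hence the joint minimization over $\{\zeta_n\}$ splits into $N$ independent one-dimensional problems $\min_{\theta_n\in[-\pi,\pi]} |\beta_n(\theta_n)e^{j\theta_n}-\vartheta_n|^2$, and a global minimizer of the whole problem is obtained by stacking global minimizers of each scalar problem. This separability is the first step I would state explicitly.

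Next I would expand the scalar cost. Writing
\begin{equation*}
|\beta_n e^{j\theta_n}-\vartheta_n|^2=\beta_n^2+|\vartheta_n|^2-2\beta_n\big(\mathrm{Re}\{\vartheta_n\}\cos\theta_n+\mathrm{Im}\{\vartheta_n\}\sin\theta_n\big)
\end{equation*}
and substituting $\beta_n=\rho(\sin(\theta_n-\delta)+1)^\alpha+\beta_{\min}$, I expand $\beta_n^2=\rho^2(\cdot)^{2\alpha}+2\rho\beta_{\min}(\cdot)^\alpha+\beta_{\min}^2$. Collecting the $(\cdot)^\alpha$ terms gives $2\rho(\cdot)^\alpha[\beta_{\min}-\mathrm{Re}\{\vartheta_n\}\cos\theta_n-\mathrm{Im}\{\vartheta_n\}\sin\theta_n]$. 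The remaining term is $-2\beta_{\min}[\mathrm{Re}\{\vartheta_n\}\cos\theta_n+\mathrm{Im}\{\vartheta_n\}\sin\theta_n]$. This reproduces $f(\theta_n|\vartheta_n)$ term by term, so the scalar cost equals $f$ identically on $[-\pi,\pi]$.

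Finally, I would address the optimization direction. The statement says ``maximizing'', but since $f$ equals the cost of a minimization problem, the correct reading is that the optimizing (i.e.\ minimizing) $\theta_n$ of $f$ yields the global optimum. Global optimality then follows from two facts. First, $f$ is continuous on the compact set $[-\pi,\pi]$; because $\sin(\theta_n-\delta)+1\ge 0$, the power $(\cdot)^\alpha$ is well defined and continuous for $\alpha\ge0$. So a global minimizer exists. Second, it can be found to arbitrary accuracy by a one-dimensional exhaustive or line search, since $f$ depends on the single variable $\theta_n$. The resulting $\zeta_n^\star=\beta_n(\theta_n^\star)e^{j\theta_n^\star}$ is then globally optimal for~\eqref{line_search_method1}.

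The main obstacle is not analytic but interpretive. One difficulty is reconciling the notation ($\xi$ versus $\rho$, and the max/min wording). The other is arguing that the exact equivalence with a one-dimensional function, together with compactness, suffices for the claimed global optimality despite $f$ being nonconvex.
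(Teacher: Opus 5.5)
Your proposal is correct and follows the paper's proof in Appendix C essentially step for step: element-wise decoupling into $N$ independent scalar problems, expansion of $|\beta_n(\theta_n)e^{j\theta_n}-\vartheta_n|^2$ into $f(\theta_n|\vartheta_n)$, and global optimality via a line search over the whole interval $[-\pi,\pi]$. Your reading of $\xi$ as $\rho$ and of ``maximizing'' as ``minimizing'' agrees with the paper's own derivation. Your continuity and compactness remark only makes explicit the existence of a minimizer, which the paper's line-search argument assumes without stating.
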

{\it Proof:} See Appendix C.

\noindent\textit{Remark 3: 
	The objective function value of the problem \eqref{AO} keeps increasing with the iteration process by employing the AO algorithm, which satisfies $f\left(\bm w^{m+1},\mathbf{\Theta}^{m+1}\right) \!\geq\! f\left(\bm w^{m+1},\mathbf{\Theta}^{m}\right)\!\geq\! f\left(\bm w^{m},\mathbf{\Theta}^{m}\right)$.
	Furthermore, due to the limits on transmit power and the number of antennas, the multiuser sum rate has an upper bound. Therefore, we can prove that the proposed algorithm converges to a locally optimal solution.}

Notably, the proposed algorithm can directly solve the SRM problem in practical RIS-aided NOMA or SDMA systems.
In addition, we establish that, as $\delta_{\mathrm{SIC}}\to 1$, the optimal RSMA structure reduces to SDMA (i.e., $\bm w_c^\star=\bm 0$), with proof in~\cite{2026arXiv260218077C}.

\vspace{-9pt}
\section{Simulation Results}
\vspace{-3pt}
\label{s4}

In this section, we demonstrate the effectiveness of our method through numerical results.
We set $\tau_f^2\!=\!\tau_g^2\!=\!1$, $S \!=\!5$, and $P_{\max}/\sigma^2 \!=\! 1$. 
Fig.~\ref{plot_Asymptotic_SNR} illustrates the variation of asymptotic SNR versus the number of RIS elements.
The simulation results from $10^6$ Monte Carlo trials validate the accuracy of the proposed theoretical expression for the asymptotic SNR in practical RIS systems. 
Notably, the correlation curves demonstrate that our theoretical analysis remains applicable to Rayleigh correlated channels.
The results simply reveal that applying phase shifts derived from ideal RIS to practical RIS systems leads to significant performance loss. 
As $\beta_{\min}$ increases, the performance gradually improves in the theoretical region.
However, even for a larger value of $\beta_{\min}$, the performance loss remains non-negligible, demonstrating that the amplitude-phase coupling must be considered in RIS phase shift design. 
Furthermore, all curves maintain the quadratic growth with the number of RIS elements $N$, confirming that practical RIS can still achieve the square law of SNR. 
\begin{figure}[tbp]
	\centering
	\includegraphics[width=0.3\textwidth]{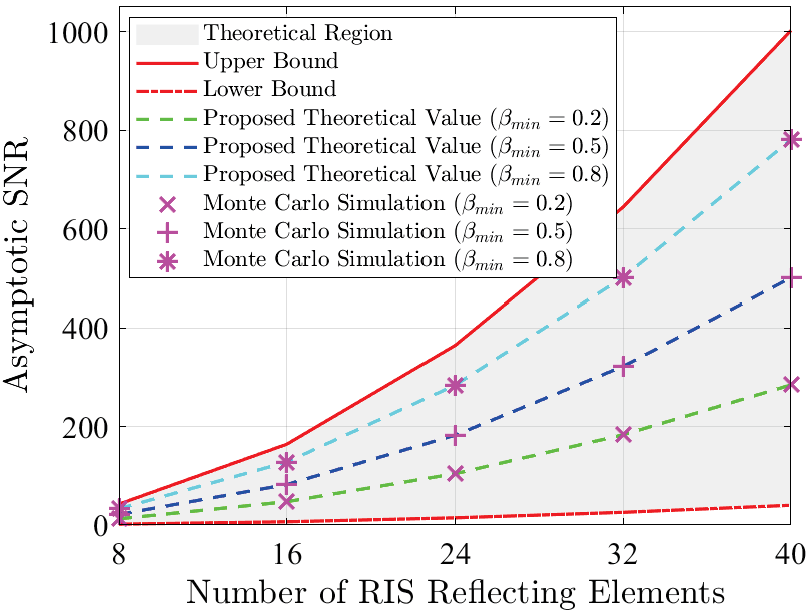}
	\caption{Asymptotic SNR versus number of the RIS elements.}
	\label{plot_Asymptotic_SNR}
	\vspace{-15pt}
\end{figure}

In our simulation, the locations of the BS and RIS are set to $(0, 0, 0)$ and $(40, 40, 0)$ meters. 
The users are uniformly distributed within an 8-meter radius circle centered at $(50, 0, 0)$ meters on the $x \!-\! y$ plane.
The large-scale fading for the BS-RIS and RIS-user channels is modeled by $\mathrm{PL} \!=\! 37.3 \!+ \!22.0\log\left(d\right)$, where $d$ denotes the distance between devices.
The small-scale fading follows the Rician channel model~\cite{wang2021beamforming}. 
Furthermore, to more comprehensively evaluate the robustness of proposed framework under different multipath channel, we further consider a line-of-sight (LoS)-dominant scenario with extremely weak scattering.
We set $K \!=\! 2, M \!=\!8, N \!=\!16, \sigma_k^2 \!=\! -110$ dBm, $\gamma_{\text{th}} \!=\!1$, $\beta_{\min} \!=\! 0.2$, $\delta\!=\!0.43\pi$, and $\alpha\!=\!1.6$. 
All simulation results are averaged over 200 independent trials for statistical reliability.
Simulation results validate the favorable convergence performance of the proposed algorithm under various configurations.
Fig.~\ref{plot_convergence} demonstrates the convergence of the proposed algorithm for practical RIS-aided RSMA systems. As observed, the proposed scheme achieves superior performance compared to the ideal RIS scheme that neglects amplitude-phase coupling effects.
\begin{figure}[tbp]
	\centering
	\includegraphics[width=0.29\textwidth]{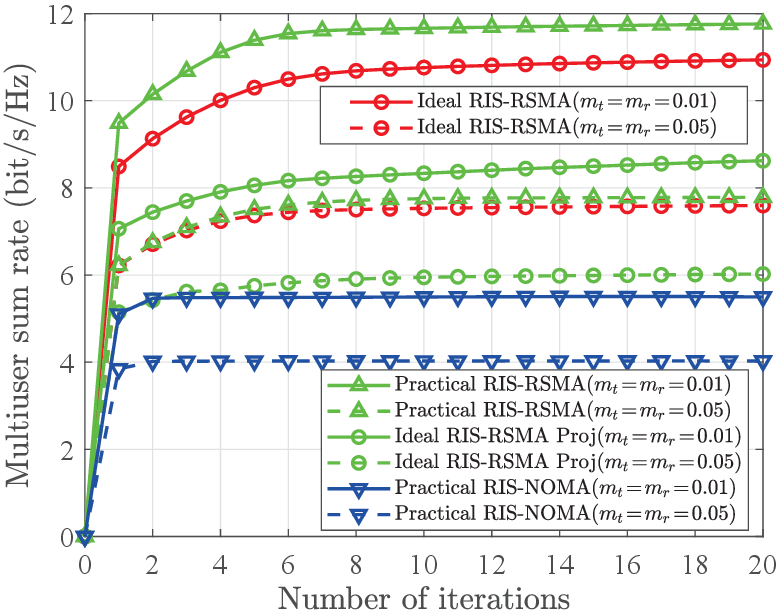}
	\caption{Convergence of the proposed AO algorithm.}
	\label{plot_convergence}
	\vspace{-15pt}
\end{figure}

\begin{figure}[tbp]
	\centering
	\includegraphics[width=0.29\textwidth]{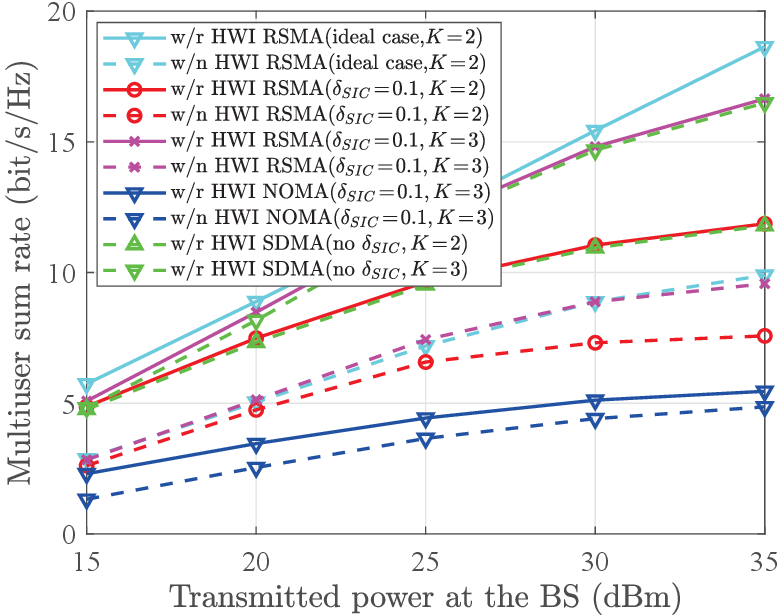}
	\caption{The sum rate versus transmitted power under different multiple access systems with $m_t=m_r=0.01$.}
	\label{HWI}
	\vspace{-20pt}
\end{figure}

To validate the robustness of the proposed scheme, Fig.~\ref{HWI} compares the multiuser sum rate performance across different beamforming designs and multiple access techniques.  Specifically, the following schemes are considered:
1) ``w/r", robust beamforming design that jointly considers HWI, imperfect SIC, and amplitude-phase coupling of the RIS.
2) ``w/n HWI", non-robust beamforming design that considers only HWI.
The results show the proposed robust RSMA scheme (w/r) consistently outperforms the non-robust (w/n) performance under the same channel, thereby validating the necessity of jointly incorporating multiple imperfect hardware considerations into system design. 
The proposed RSMA scheme significantly outperforms NOMA systems and achieves performance comparable to SDMA systems, validating the effectiveness of the proposed robust design framework.
Moreover, to ensure fair comparison, all MA schemes are evaluated under identical channel conditions, transmit power budget, and algorithm update rules, with resource allocation parameters independently optimized at each SNR point. 

Fig.~\ref{Diff_parameters_RSMA_R_c_0} illustrates the impact of different parameters on sum rate and the convergence behavior of RSMA. 
In Fig.~\ref{Diff_parameters_RSMA_R_c_0}\subref{Diff_parameters}, the blue, red, and green curves depict system performance versus $m_t \!=\! m_r$, $\delta_{\mathrm{SIC}}$, and $\beta_{\mathrm{mim}}$, respectively.
The ``w/r HWI" curves indicate that increasing $m_t$ and $m_r$ leads to monotonic degradation in system performance.
The ``w/r RIS" curves maintain stable performance across different $\beta_{\min}$, validating that the proposed algorithm accommodates various degrees of RIS amplitude-phase coupling.
Furthermore, the ``w/r SIC" curves show that the system sum rate remains nearly unchanged as $\delta_{\mathrm{SIC}}$ increases, indicating that, under the proposed algorithm, RSMA can adaptively degenerate into SDMA according to the channel conditions and thus becomes insensitive to imperfect SIC. 
This phenomenon is further verified in Fig.~\ref{Diff_parameters_RSMA_R_c_0}\subref{RSMA_R_c_0}, where the common stream rate gradually converges to zero, and the system sum rate gradually approaches that of SDMA.
Fig.~\ref{RSMA_SDMA} shows that RSMA outperforms SDMA under low SIC imperfection in the LoS-dominant multipath channel, with performance gradually degrading as $\delta_{\text{SIC}}$ increases and eventually converging to SDMA, again validating the flexible interference management of RSMA.

\begin{figure}[tbp]
	\centering
	\subfloat[\small Parameter Sensitivity]{
		\includegraphics[width=0.49\linewidth]{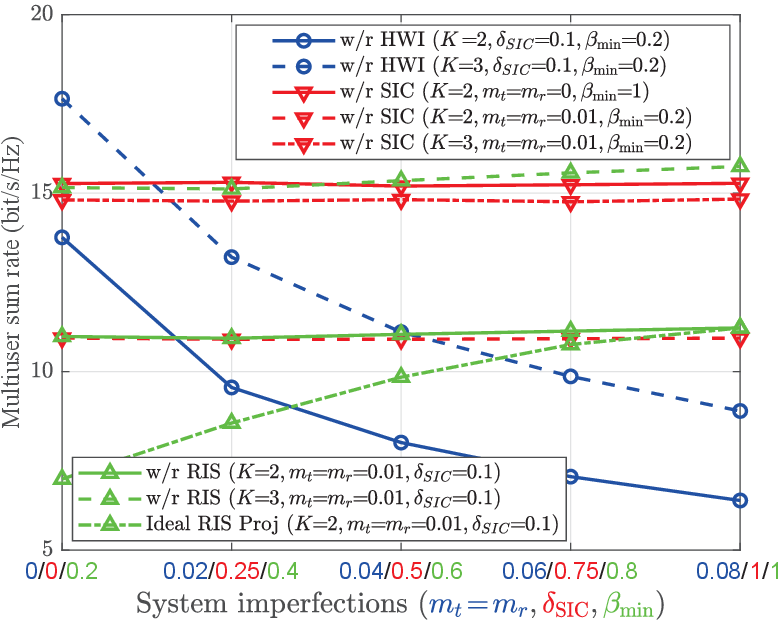}
		\label{Diff_parameters}}
	\subfloat[\small RSMA-SDMA Convergence]{
		\includegraphics[width=0.49\linewidth]{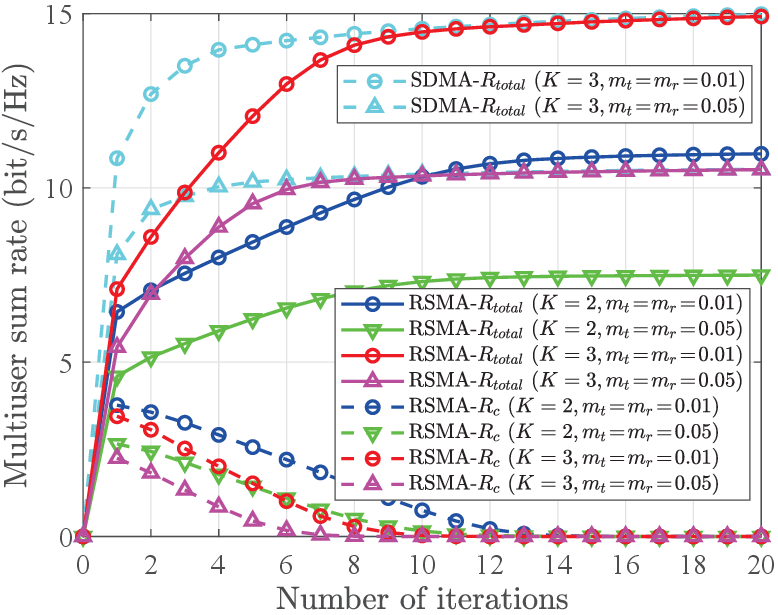}
		\label{RSMA_R_c_0}}%
	\caption{The sum rate versus $m_t, m_r, \delta_{\text{SIC}}$ or $\beta_{\min}$ for different designs and RSMA convergence characteristics.}
	\label{Diff_parameters_RSMA_R_c_0}
	\vspace{-12pt}
\end{figure}

\begin{figure}[tbp]
	\centering
	\includegraphics[width=0.29\textwidth]{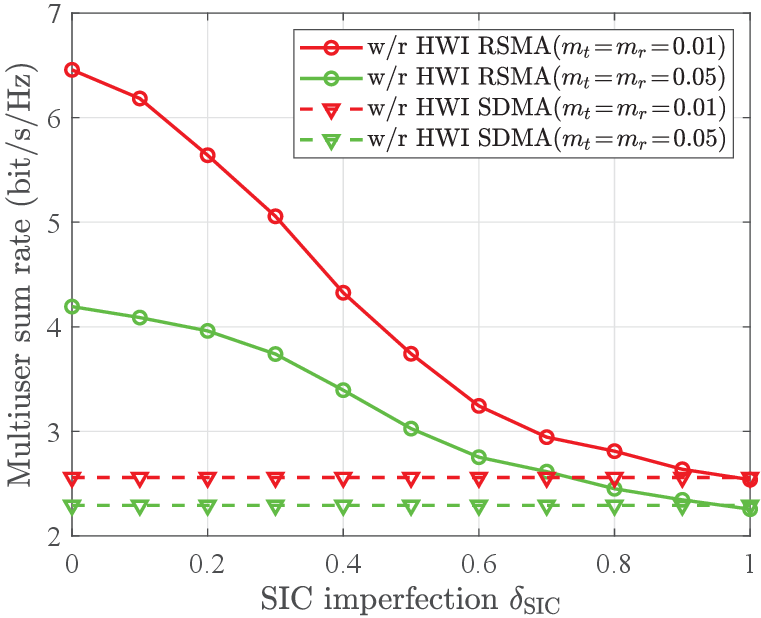}
	\caption{Case study: performance comparison between RSMA and SDMA.}
	\label{RSMA_SDMA}
	\vspace{-20pt}
\end{figure}

\vspace{-10pt}
\section{conclusion}
\label{s5}
This paper investigated robust beamforming design for practical RIS-aided RSMA systems under hardware imperfections, including amplitude-phase coupling, HWI, and imperfect SIC. 
Specifically, we first theoretically characterized the performance loss when applying beamformers designed under the ideal RIS model to practical RIS systems. 
Furthermore, we derived a closed-form expression for the distortion noise power caused by transceiver HWI to quantify the HWI effects and formulated the SRM problems under imperfect SIC.
To solve the resulting non-convex optimization problem, we developed an effective block variable relaxation algorithm.
Simulation results demonstrated that the proposed robust scheme significantly outperforms traditional OMA and NOMA systems and benchmark schemes that neglect hardware imperfections, validating the importance of considering practical hardware limitations in RIS-aided RSMA system design.

\vspace{-12pt}
\section*{Appendix A: Proof of $\Phi_{c,k}$ and $\Phi_{p,k}$}

\label{appendices_0}
According to \eqref{y_K}, we can obtain $\Phi_{c,k}$ and $\Phi_{p,k}$ as follows
\begin{equation*}
	\begin{split}
		\Phi_{c,k} &= \mathbb{E}{[|\bm{h}_k^\mathrm{H}\bm \eta_t|^2]}+\mathbb{E}{[|n_k|^2] +\mathbb{E}{[|\bm \eta_{r,k}|^2}]},\\
		\Phi_{p,k} &= \delta_{\text{SIC}}^2|\bm h_k^\mathrm{H}\bm w_c|^2+\mathbb{E}{[|\bm{h}_k^\mathrm{H}\bm \eta_t|^2]}+\mathbb{E}{[|n_k|^2] +\mathbb{E}{[|\bm \eta_{r,k}|^2}]},
	\end{split}
\end{equation*}
\noindent where each term represents the transmitter HWI, AWGN, and receiver HWI, respectively. 
Then, we have
\begin{equation*}
	\begin{split}
		\mathbb{E}{[|\bm{h}_k^\mathrm{H}\bm \eta_t|^2]} &= \bm{h}_k^\mathrm{H} m_t \widetilde{\text{diag}}(\bm w_c \bm w_c^\mathrm{H}+\textstyle\sum_{k=1}^{K}\bm w_k \bm w_k^\mathrm{H}) \bm h_k,\\
		\mathbb{E}{[|n_k|^2]} &= \sigma_k^2, \mathbb{E}{[|\bm \eta_{r,k}|^2]}  = m_r [\bm{h}_k^\mathrm{H} \mathbb{E}{[\bm x \bm x^\mathrm{H}]} \bm{h}_k+\sigma_k^2].\\
	\end{split}
\end{equation*}
\vspace{-5pt}

Due to $\bm x = \bm w_cs_c + \sum_{k=1}^{K}{\bm w_k s_k}+ \bm \eta_t$, $\mathbf{A} \!=\! \bm w_c \bm w_c^\mathrm{H}\!+\!\sum_{k=1}^{K}\bm w_k \bm w_k^\mathrm{H}, \sigma^2 \!=\! (1 \! + \! m_r)\sigma_k^2$, we can obtain
	\vspace{-5pt}
\begin{equation*}
	\begin{split}
		\mathbb{E}{[\bm x \bm x^\mathrm{H}]} \!=\! \mathbf{A} + m_t \widetilde{\text{diag}}(\mathbf{A}).
	\end{split}
\end{equation*}
\vspace{-15pt}

Therefore, the expression of $\Phi_{c,k}$, $\Phi_{p,k}$ can be obtained as
\begin{equation*}
	\begin{split}
		\Phi_{c,k} &= \bm h_k^\mathrm{H}[m_r \mathbf{A}+m_t(1+m_r)\widetilde{\text{diag}}(\mathbf{A})]\bm h_k+\sigma^2,	\\
		\Phi_{p,k} &= \delta_{\text{SIC}}^2|\bm h_k^\mathrm{H}\bm w_c|^2\!+\!\bm h_k^\mathrm{H}[m_r \mathbf{A}\!+\!m_t(1\!+\!m_r)\widetilde{\text{diag}}(\mathbf{A})]\bm h_k\!+\!\sigma^2,
	\end{split}
\end{equation*}
\vspace{-20pt}

\section*{Appendix B: Proof of Proposition \ref{proposition_1}}
\label{appendices_1}
For asymptotic SNR $\gamma^a_P$, we can obtain 
\vspace{-3pt}
\begin{equation*}
	\begin{aligned}
		\mathbb{E} \left[\left|\sum_{n=1}^{N} |f_n| \beta_n\left(\theta_n^\star\right)|g_n|\right|^2 \right] =  \sum_{n=1}^{N}\mathbb{E}[|f_n|^2]\mathbb{E}[|g_n|^2] \mathbb{E}[\beta_n^2]\\
		+\sum_{n=1}^{N}\sum_{m\neq 	n}^{N}\mathbb{E}[|f_n|]\mathbb{E}[|f_m|]\mathbb{E}[|g_n|]\mathbb{E}[|g_m|\mathbb{E}[\beta_n]\mathbb{E}[\beta_m]]. \\
	\end{aligned}
	\vspace{-3pt}
\end{equation*}

Due to the lack of closed-form solutions for $\beta_n\left(\theta_n\right)$ involving nonlinear trigonometric functions and arbitrary power $\alpha$, we can approximately transform $\beta_n\left(\theta_n\right)$ as 
\begin{equation*}
	\begin{split}
		\beta_n(\theta_n) &\approx \frac{1 - \beta_{\min}}{2^\alpha} \left( 1 + \sum_{s=1}^{S} \frac{\prod_{t=0}^{s-1}(\alpha - t)}{s!}\frac{1}{(2j)^s} \right. \\
		&\quad \left. \times \sum_{l=0}^{s} \frac{s!}{l!(s-l)!}(-1)^l e^{j(2l-s)\delta} e^{j(s-2l)\theta_n} \right) + \beta_{\min},
	\end{split}
\end{equation*}
\noindent where $S$ is the truncation order of the Taylor expansion, $t \geq 1$ is a positive integer, and $s!$ is the factorial of $s$.

Due to $\theta_n^\star  \sim \text{Uniform}[0, 2\pi)$, we can obtain
\begin{equation*}
	\begin{aligned}
		\eta = \frac{\gamma^a_P}{\gamma^a_I} &= \frac{\frac{\tau_f^2\tau_g^2 P}{16\sigma^2}
			\left[N^2\pi^2\mathbb{E}[\beta_n]^2 + N\left(16\mathbb{E}[\beta_n^2]-\pi^2\mathbb{E}[\beta_n]^2\right) \right]}{\frac{\tau_f^2\tau_g^2 P}{16\sigma^2}
			\left[N^2\pi^2 + N\left(16-\pi^2\right) \right]}\\
		&=\frac{\pi^2\mathbb{E}[\beta_n]^2 +\frac{\left(16\mathbb{E}[\beta_n^2]-\pi^2\mathbb{E}[\beta_n]^2\right)}{N}}{
			\pi^2 + \frac{16-\pi^2}{N} } \leq 1,
	\end{aligned}
	\vspace{-8pt}
\end{equation*}
\noindent where
\vspace{-8pt}
\begin{equation*}
	\begin{aligned}
		\mathbb{E}[\beta_n] &= \frac{1 - \beta_{\min}}{2^\alpha}\  B + \beta_{\min}, B =  1 + \sum_{k=1}^{\lfloor S/2 \rfloor} \frac{\prod_{t=0}^{2k-1}(\alpha - t)}{2^{2k} (k!)^2},\\
		\mathbb{E}[\beta_n^2] &= \frac{(1-\beta_{\min})^2}{2^{\alpha}} \!\left[1 \! + \! 2B \! + \! C \! + \!  \frac{1-\beta_{\min}}{2^{\alpha-1}} \beta_{\min} B\right]\!\! + \! \beta_{\min}^2, \\
		C &= \sum_{s_1=1}^S \sum_{l_1=0}^{s_1} \sum_{s_2=1}^S \sum_{l_2=0}^{s_2} D_{s_1,l_1} D_{s_2,l_2} \delta_{s_1-2l_1, 2l_2-s_2},\\
		D_{s,l} &= \frac{\prod_{t=0}^{s-1}(\alpha - t)}{s!} \frac{1}{(2j)^s} \frac{s!}{l!(s-l)!} (-1)^l e^{j(2l-s)\delta}.
	\end{aligned}
\end{equation*}

The proof is thus completed.

\vspace{-12pt}
\section*{Appendix C: Proof of Proposition \ref{proposition_2}}
\label{appendices_2}
	Note that objective functions $| \zeta_n - \vartheta_n |^2$ and $| \zeta_m - \vartheta_m |^2$ 
	in problem \eqref{line_search_method1} are independent for $m\neq n$ and the constraints associated with $\theta_n$ and $\theta_m$ are also independent. 
	Therefore, we can solve it by solving the following $N$ independent problems in parallel, which is an element-wise method.
	\vspace{-3pt}
	\begin{equation}\label{line_search_zeta_2}
		\begin{split}
			&\min_{ \theta_n } \left| \rho ( {\rm sin} (\theta_n - \delta) + 1 )^{\alpha} e^{j \theta_n} + \beta_{\min} e^{j \theta_n} - \vartheta_n \right|^2.
		\end{split}
		\vspace{-5pt}
	\end{equation}
	
	The global optimal solution of the problem \eqref{line_search_zeta_2} can be obtained by a line search approach over $\theta_n$, due to the fact that we traverse the entire constraint space. 
	Then, the objective function of problem \eqref{line_search_zeta_2} can be calculated as
	\begin{equation*}
		\begin{split}
			&| \beta(\theta)e^{j\theta} - \vartheta |^2 = |\beta(\theta)|^2 - 2{\rm Re}\{ \beta(\theta)e^{j\theta}\, \vartheta^*  \} + |\vartheta|^2, \\
			&{\rm Re}\{ \beta(\theta)e^{j\theta}\, \vartheta^*  \} = \rho \left({\rm sin}(\theta - \delta) + 1\right)^{\alpha} [ {\rm Re}\{ \vartheta \} {\rm cos}(\theta) +  {\rm Im}\{\vartheta\} \\ 
			&\qquad\qquad\quad\qquad{\rm sin}(\theta) ] + \beta_{\min} [{\rm Re}\{ \vartheta \} {\rm cos}(\theta) + {\rm Im}\{\vartheta\}{\rm sin}(\theta) ], \\
			&|\beta(\theta)|^2 = 
			\rho^2 \left({\rm sin}(\theta \!-\! \delta) \!+\! 1\right)^{2\alpha} \!+\! 2\beta_{\min} \rho \left({\rm sin}(\theta \!-\! \delta) \!+\! 1\right)^{\alpha} \!+\! \beta_{\min}^2.
		\end{split}
	\end{equation*}
	
Therefore, we can reformulate the objective function of problem \eqref{line_search_zeta_2} as
	\begin{equation*}
		\begin{split}
			f(\theta_n|\vartheta_n) &= \rho^2 \left({\rm sin}(\theta_n - \delta) + 1\right)^{2\alpha} + 2\rho ({\rm sin}(\theta_n - \delta) + 1)^{\alpha}  \\
			& [ \beta_{\min} - {\rm Re}\{ \vartheta_n \} {\rm cos}(\theta_n) - {\rm Im}\{\vartheta_n\}{\rm sin}(\theta_n) ]  + \beta_{\min}^2   \\
			& - 2\beta_{\min} [{\rm Re}\{ \vartheta_n \} {\rm cos}(\theta_n) + {\rm Im}\{\vartheta_n\}{\rm sin}(\theta_n) ]  + |\vartheta_n|^2.
		\end{split}
	\end{equation*}
	
	The proof is completed.
\vspace{-10pt}
\bibliographystyle{IEEEtran}
\bibliography{myref}
\end{document}